\documentclass[runningheads]{llncs}
\usepackage[T1]{fontenc}

\usepackage[letterpaper, margin=1in]{geometry}

\usepackage[normalem]{ulem}  

\usepackage{thm-restate}
\usepackage{amsmath,amssymb}
\usepackage{graphicx}

\usepackage[numbers,sort&compress]{natbib}
\usepackage{xifthen}
\usepackage{multirow}
\usepackage{array}

\usepackage{xcolor}

\usepackage{thmtools} 
\usepackage{thm-restate}
\usepackage[ruled,linesnumbered,vlined]{algorithm2e}
\usepackage{cleveref} 

\usepackage{float}
\usepackage{xspace}
\usepackage{subfigure}
\usepackage{bbm}
\usepackage{natbib}

\newcommand{\IC}{\textnormal{IC}\xspace}
\newcommand{\LT}{\textnormal{LT}\xspace}

\newcommand{\InfMax}{\textsc{InfMax}\xspace}

\renewcommand{\Pr}[2][]{ \ifthenelse{\isempty{#1}}
  {\mathbf{Pr}\left[#2\right]} {\mathbf{Pr}_{#1}\left[#2\right]} }
\newcommand{\E}[2][]{ \ifthenelse{\isempty{#1}}
  {\mathbf{E}\left[#2\right]}
  {\mathbf{E}_{#1}\left[#2\right]} }
\begin{document}
\title{Algorithms and Complexity of Influence Maximization on Directed Acyclic Graphs}
\titlerunning{Algorithms and Complexity of Influence Maximization on DAGs}
%
\author{Panfeng Liu\inst{}\orcidID{0009-0000-5996-7206} \and
Biaoshuai Tao\inst{}\orcidID{0000-0003-4098-844X}}
\authorrunning{P. Liu and B. Tao}
%
\institute{Shanghai Jiao Tong University, Shanghai, China\\
\email{\{liupf22,bstao\}@sjtu.edu.cn}}
\maketitle              
\begin{abstract}

This paper investigates the influence maximization problem under the Independent Cascade (IC) and Linear Threshold (LT) models. While this problem is known to be APX-hard on general graphs, we explore its computational limits by focusing on Directed Acyclic Graphs (DAGs) and more restricted tree structures. Our primary result demonstrates that influence maximization remains APX-hard on DAGs under the LT model, suggesting that the absence of cycles is insufficient to achieve a polynomial-time approximation scheme (PTAS). In contrast, we show that the problem becomes tractable when the topology is further restricted to out-arborescences and in-arborescences. Specifically, for out-arborescences, we show that the IC model and the LT model are equivalent, and we develop exact polynomial-time algorithms based on dynamic programming that leverage the unique path properties of these structures.
For in-arborescences, it is known that the problem is polynomial-time solvable under the LT model, and it is NP-hard under the IC model.
We complement these results by presenting a fully polynomial-time approximation scheme (FPTAS) for the IC model.

\keywords{Influence Maximization \and Directed Acyclic Graphs \and Arborescence \and APX-hardness \and Dynamic Programming.}
\end{abstract}
\section{Introduction}\label{sec:Intro}

The \emph{influence maximization} (\InfMax) problem, which is defined to \emph{seed} a set of nodes in the social network to maximize the spreading of the resulting \emph{cascade} \citep{KKT15,DPRM2001,BJJR1987,RD2002,Rigobon2002,PV01,KR2010,chen2009approximability,angell2017don,schoenebeck2019beyond,schoenebeck2022think}, where a \emph{cascade} is a basic social network process that captures the character of the information diffusion in a social network, in which a number of initially infected seeds process a certain attribute of information and spread this attribute to their neighbor.

Two most well-known cascade models are the independent cascade (\IC) model and the linear threshold (\LT) model, both introduced in the seminal work of Kempe et al.~\cite{KKT15}. These two models are studied almost exclusively in the past literature of this field, and we will focus on these two models in this paper.  
In the \IC model, a newly activated node $u$ attempts to activate each of its inactive neighbors $v$ with a fixed probability, independently of other infections (we use the words ``activation'' and ``infection'' interchangeably throughout this work).
In the \LT model, if the graph is unweighted, each non-seed vertex is assigned a threshold randomly and independently from the interval $[0,1]$. A vertex becomes active when the fraction of its infected neighbors exceeds its threshold.
Formal definitions of the two models are available in Sect.~\ref{sec:Preliminary}.

The computational complexity and approximability of \InfMax are well-understood in general graphs. For both the \IC and \LT models, the problem is known to be APX-hard on directed and undirected graphs~\cite{KKT15,schoenebeck2020influence}. On the positive side, since the influence function $\sigma(\cdot)$ is submodular under these models~\citep{MR2010}, a simple greedy algorithm achieves a $(1-1/e)$-approximation ratio~\citep{nemhauser1978analysis, KKT15}. For undirected networks, this approximation guarantee can be slightly improved~\cite{khanna2014influence, schoenebeck2020limitations}. However, for directed graphs under the \IC model, the $(1-1/e)$ bound remains tight unless $\text{P} = \text{NP}$~\citep{KKT15}. These hardness results suggest that achieving exact solutions or better approximation ratios requires restricting the graph topology.

In this paper, we focus on Directed Acyclic Graphs (DAGs) and their variants. DAGs represent a natural and significant special case for influence propagation, as influence in the real world often follows a clear hierarchical and unidirectional structure. For instance, in social media or professional networks, such as citation networks, information cascades, influence typically flows from high-profile figures or domain experts to their followers in a top-down, non-recurrent manner. 
Moreover, DAG-based decompositions are widely used in the analysis of diffusion processes on general graphs. DAGs are among the most fundamental graph classes in algorithm design and complexity theory. 
The existence of topological orderings often makes problems more tractable on DAGs, providing insight into the boundary between tractable and intractable cases. 
Despite the absence of cycles, we first present our primary hardness result: \InfMax remains APX-hard on DAGs under the \LT model, indicating that the problem's inherent difficulty persists even in such structured settings. Motivated by this boundary, we further investigate two more restricted structures: out-arborescences and in-arborescences. We demonstrate that the unique path properties of these trees allow for polynomial-time algorithms or approximation schemes via dynamic programming, thereby providing a complete characterization of the transition from tractability to APX-hardness in hierarchical networks.

\subsection{Our results}
We first study the approximability of \InfMax on directed acyclic graphs.
For the \IC model, the same proof in Kempe et al.~\cite{KKT15} can show that \InfMax remains NP-hard to approximate to within a factor of $(1-1/e+\varepsilon)$ (for any $\varepsilon>0$), as the reduction used by Kempe et al. uses a directed acyclic graph.
In this paper, we show that \InfMax is APX-hard for the \LT model for directed acyclic graphs.
Specifically, there exists a universal constant $\tau>0$ such that \InfMax is NP-hard to approximate to within a factor of $(1-\tau)$.
See Sect.~\ref{sec:apx-hardness} for details.

Motivated by the hardness of directed acyclic graphs, we then focus on directed trees.
For out-arborescences, in Sect.~\ref{subsec:out-arborescence}, we show that the unique path property permits a polynomial-time dynamic programming algorithm. Crucially, we prove that the \LT and \IC models are equivalent on this structure, rendering our dynamic programming approach universally applicable to both propagation models in this setting.
For in-arborescences, the \IC and \LT models are no longer equivalent.
Wang et al.~\cite{wang2016bharathi} show that \InfMax under \LT model admits a polynomial-time algorithm, while Lu et al.~\cite{lu2017solution} show that \InfMax under the \IC model is still NP-hard.
We complement these results by presenting a fully polynomial-time approximation scheme for the \IC model in Sect.~\ref{subsec:in-arborescence}.

\section{Preliminaries}\label{sec:Preliminary}
Throughout this paper, a social network is represented by a weighted directed graph $G=(V, E, P)$, where $V$ is the node (i.e., user) set with $|V|=n$, $E$ is the edge (i.e., the connections between users) set with $|E|=m$ and $P=\{p_e\in(0,1]\}_{e \in E}$ is the edge weights.
For a directed edge $(u,v) \in E$, we say $u$ (resp. $v$) is the incoming (resp. outgoing) neighbor of $v$ (resp. $u$), and $(u,v)$ is the outgoing (resp. incoming) edge of $u$ (resp. $v$). 

This paper focuses on the class of \emph{directed acyclic graphs} (DAGs) and two special sub-classes: \emph{in-arborescences} and \emph{out-arborescences}.

\begin{definition}
Let $G=(V,E)$ be a directed acyclic graph (DAG). 
For a vertex $v \in V$, an in-arborescence rooted at $v$ is a directed tree in which every vertex has a unique directed path to $v$, 
i.e., all edges are oriented toward the root $v$. 
An out-arborescence rooted at $v$ is a directed tree in which there exists a unique directed path from $v$ to every other vertex, i.e., all edges are oriented away from the root.
\end{definition}

\paragraph{Diffusion Models.} 
A \emph{diffusion model} $\Gamma$ is a (possibly random) function that maps from a seed set $S$ (the vertices that are initially infected) to a vertex set $\Gamma(S)$ (the set of influenced vertices at the end of the spreading).

\begin{definition}[Independent Cascade Model (\IC) \cite{KKT15}]\label{def:IC}
         Given a social network (directed graph) $G=(V, E, P)$, the \IC model assigns the state of each node as \emph{active} or \emph{inactive}.
         On the input seed set $S \subseteq V$, \IC outputs the set $\Gamma_\IC(S)$ as follows:
         \begin{enumerate}
             \item At timestamp $0$, only nodes in $S$ are active.
             \item At each timestamp $t=1,2,\dots$, each newly activated node $u$ from the previous timestamp gets one chance to activate its inactive outgoing neighbor $v$ with probability $p_{(u,v)}$. The attempts to activate neighbors are independent of each other. If multiple incoming neighbors of an inactive node attempt to activate it, each attempt is considered separately with its own probability.
             \item The diffusion process terminates when no additional activation occurs, and \IC outputs $\Gamma_\IC(S)$ as the set of active nodes.
         \end{enumerate}
    \end{definition}

\begin{definition}[Linear Threshold Model (\LT) \cite{KKT15}]\label{def:LT}
         Given a social network (directed graph) $G=(V, E, P)$ where the edge weights satisfy  $\sum_{u \in N(v)}p_{(u,v)} \leq 1$ for each $v \in V$, 
         the \LT model assigns the state of each node as \emph{active} or \emph{inactive}.
        Given the input seed set $S \subseteq V$, \LT outputs the set $\Gamma_{\LT}(S)$ as follows:
         \begin{enumerate}
            \item At the beginning, for each vertex $v$, a threshold $\theta_v \in [0,1]$ is sampled uniformly at random independently.
             \item At timestamp $0$, only nodes in $S$ are active.
             \item At each timestamp $t=1,2,\dots$, a node $v$ is active if the sum of the weights of
edges from $v$’s active in-neighbors to $v$ exceeds the threshold $\theta_v$:$$\sum_{u:u \in N(v)\text{ and }u\text{ is active}}p_{(u,v)} \geq \theta_v$$ 
             \item The diffusion process terminates when no additional activation occurs, and \LT outputs $\Gamma_{\LT}(S)$ as the set of active nodes. 
         \end{enumerate}
    \end{definition}

\begin{definition}[Influence Maximization (\InfMax) \cite{KKT15}]
         Given a directed weighted graph $G=(V,E,P)$, a positive integer $k$ and a diffusion model $\Gamma$, the \InfMax problem aims to find a seed set $S \subseteq V$ with $|S|\leq k$ that maximizes the expected spread $\sigma_\Gamma(S)$  under the given diffusion model. 
\end{definition}

Given a diffusion model $\Gamma$, let $\sigma_\Gamma(S)=\mathbb{E}(|\Gamma(S)|)$ be the expected number of infected vertices.
The goal is to find $S$ that maximizes $\sigma_\Gamma(S)$.
It is known that computing $\sigma_\Gamma(\cdot)$ is $\#$P-hard for both \IC and \LT models~\cite{CYZ10,CCY10,KKT15}. On the other hand, by simple Monte-Carlo samplings, we can easily obtain a fully polynomial-time randomized approximation scheme (FPRAS) for computing $\sigma_\Gamma(\cdot)$. 
In addition, for directed acyclic graphs, the evaluation of $\sigma_\Gamma(\cdot)$ under the \LT model becomes tractable: since the influence propagation is acyclic, given a seed set, we can compute the probability of infection for each vertex one by one following the topological order of the graph~\cite{CYZ10}.
This problem remains $\#$P-hard for the \IC model even for directed acyclic graphs~\cite{nguyen2013budgeted}.
However, as we will see later, the APX-hardness of the influence maximization problem with the \IC model on directed acyclic graphs does not come from the $\#$P-hardness of evaluating $\sigma_\Gamma(\cdot)$.
For in-arborescences and out-arborescences, computing $\sigma_\Gamma(\cdot)$ for the \IC model becomes tractable: we can compute the probability of infection for each vertex in the bottom-up order (for in-arborescences) or the top-down order (for out-arborescences).

In this paper, we omit the subscript \(\Gamma\) of $\sigma_\Gamma$ when it is clear from the context.

\section{APX-Hardness of \InfMax on Directed Acyclic Graphs}
\label{sec:apx-hardness}

In the context of propagation modeling, directed acyclic graphs eliminate cyclic feedback, ensuring that the diffusion of influence follows a clear temporal or hierarchical order, which seemingly significantly simplifies the propagation dynamics. 
However, this section demonstrates that such structural simplification is insufficient to significantly reduce the computational hardness of the influence maximization problem. We will prove that even when the propagation graph is restricted to a DAG, the influence maximization problem remains difficult in the sense of approximation; specifically, the problem is APX-hard.

For the \IC model, there is a straightforward reduction from the \emph{max-k-coverage} problem (selecting $k$ subsets of a set with the largest union) to the \InfMax problem. Firstly, create a vertex for each subset, and a large number $w$ of vertices for each element. Then, create a directed edge from a vertex representing the subset to each one of the $w$ vertices representing the element if the element is contained in the subset in the max-k-coverage instance, and set the weight of this edge to be $1$. Seeding a vertex representing a subset simulates selecting a subset in the max-k-coverage problem. This is obviously an approximation-ratio preserving reduction by setting $w$ to be sufficiently large. Given that the inapproximability ratio $(1-1/e+\varepsilon)$ of max-k-coverage is known due to Feige~\citep{Feige98}, we have the same inapproximability ratio for \InfMax.\footnote{This is exactly the reduction used in Kempe et al.~\cite{KKT15}.} In addition, the above-mentioned graph is a directed acyclic graph, so the same inapproximability ratio extends to the \InfMax problem with directed acyclic graphs.
The inapproximability does not come from the $\#$P-hardness of evaluating $\sigma(\cdot)$, as this function is easy to compute in the above instance since all edges have weight $1$.

We now complement this result by showing that the \InfMax problem on directed acyclic graphs is also APX-hard under the \LT model.

\begin{theorem}[APX-Hardness on DAGs]
\label{thm:apx-hardness}
There exists a universal constant $\tau>0$ such that \InfMax on directed acyclic graphs with the \LT model is NP-hard to approximate to within a factor of $(1-\tau)$.
\end{theorem}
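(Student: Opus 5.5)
We will give a gap-preserving reduction from an APX-hard covering problem with bounded occurrence. The natural source is vertex cover on cubic (3-regular) graphs, or equivalently max-$k$-coverage restricted to instances where every set has size $3$ and every element lies in exactly $2$ sets. In this setting it is NP-hard to distinguish instances where $k$ sets cover all $m$ elements from instances where every $k$ sets leave at least $\delta m$ elements uncovered, for a universal constant $\delta>0$. The reason for bounded degree is that the \LT model caps in-weights at $\sum_u p_{(u,v)}\le 1$. An element node cannot receive weight $1$ from every set containing it, so the \IC coverage reduction fails. With in-degree $2$, however, each set-to-element edge can carry weight $1/2$. The DAG then has two layers: set nodes (one per vertex of the cubic graph) and element nodes (one per edge). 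For each element we attach $w$ copies, each with in-edges of weight $1/2$ from its two sets.

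The analysis computes $\sigma$ exactly, since by acyclicity a copy of element $e$ is infected with probability equal to $(1/2)\cdot(\text{number of seeded sets containing } e)$. Seeding $k$ set nodes therefore gives
\[
\sigma(S)=k+w\cdot\tfrac12\sum_{e}|\{s\in S: e\in s\}| = k+\tfrac{3wk}{2}.
\]
This is linear, so it carries no gap, and \textbf{this is the main obstacle}: \LT on a single layer is modular, not coverage-like. We must introduce nonlinearity through a second layer that rewards double coverage less than single coverage. The fix I would use places, for each element $e$, a gadget node $g_e$ with in-edges of weight $1/2$ from its two sets, feeding $w$ copies via weight-$1$ edges. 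Then $g_e$ is infected with probability $1/2$ if one endpoint is seeded and $1$ if both are, which is still linear.

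Linearity in the number of seeded in-neighbors is intrinsic to \LT. The nonlinearity must instead come from seeding non-source nodes, where a seed is "wasted" if its node would have been infected anyway. So I would reverse the roles and reduce from vertex cover more directly. For each element $e=\{a,b\}$, create node $x_e$ and a chain or fan in which seeding $x_e$ itself costs budget. The cleaner route is a reduction that exploits $1-\prod$-type behavior across layers. Let a node $v$ have in-edges of weight $1/2$ from two intermediate nodes $y_a,y_b$, each of which is infected with probability $q$ or $1$. The infection probability of $v$ is still linear in the infection probabilities of its in-neighbors, so composition stays affine in the indicator vector of seeds, except that seeding a node sets its probability to $1$ regardless of its inputs. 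The nonlinearity is thus exactly the "overwrite" effect.

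The plan is therefore as follows. Build a DAG in which each set-node $s$ has a large private payoff only if it is seeded rather than infected: $s$ receives in-weight from element-side structures, so that seeding $s$ yields marginal gain $W(1-\Pr[s\text{ infected otherwise}])$. Concretely, reduce from vertex cover on cubic graphs by giving each edge $e$ a source-like node whose infection sends weight $1/3$ to each endpoint vertex node. Each vertex node $v$ carries $W$ followers, and the choice of $W$ and the $1/3$ weights makes the objective equal to a constant minus a term proportional to the number of uncovered edges in the complement structure. Checking that optimal solutions can be assumed to seed only one node type (by an exchange argument), and that the completeness/soundness gap $\delta m$ translates into a constant-factor gap $\tau$ once $m=\Theta(n)$ on cubic graphs, are the remaining routine steps. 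Getting the gadget weights so that the objective is exactly a coverage function is where I expect to spend the most effort.
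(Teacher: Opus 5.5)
You have the right diagnosis: under the \LT model the only nonlinearity comes from the ``overwrite'' effect of seeding a node that could otherwise be infected, and bounded-degree vertex cover is a suitable source problem. The gap is that the proposal never supplies a gadget that turns this into a gap, and the concrete construction you sketch produces none. Let $A$ be the seeded edge nodes $y_e$ and $B$ the seeded vertex nodes. Each vertex node has $W$ followers and receives weight $1/3$ from each incident $y_e$, so $\sigma(S)=|A|+(W+1)\bigl(|B|+\tfrac13(2|A|-\mathrm{inc}(A,B))\bigr)$, where $\mathrm{inc}(A,B)$ counts pairs $e\in A$, $v\in B$ with $v\in e$. Hence $\sigma(S)\le (W+1)k$ for $W\ge 2$, with equality by seeding any $k\le n$ vertex nodes, in YES and NO instances alike. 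Forcing edge seeds does not help. The only graph-dependent term is then $-\frac{W+1}{3}\mathrm{inc}(A,B)$, which is minimized by choosing $B$ to cover as \emph{few} edges as possible. That is a densest-subgraph-type objective, unrelated to (indeed opposite to) the existence of a small vertex cover.

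The missing idea, which is the core of the paper's proof, is to make the budget force exactly one seed into each small gadget and let the \emph{placement} of that seed compute the OR of the edge's endpoints. For each edge $(u,v)$ the paper adds intermediate nodes $x_u,x_v$ with edges $u\to x_u$ (weight $0.8$), $v\to x_v$ (weight $0.4$) and $x_u\to x_v$ (weight $0.4$). It attaches $w-1$ weight-$1$ leaves to each intermediate node, uses four copies of the gadget, and sets the budget to $k+4m$. With one intermediate seed in a copy, the best placement gives $w\bigl(1.4+0.4\max(s_u,s_v)\bigr)$ expected infections: seed $x_v$ if $u$ is seeded, $x_u$ if $v$ is seeded, and get only $1.4w$ if neither is. This is a genuine coverage function, and the chain edge $x_u\to x_v$ together with the asymmetric weights is what makes it work. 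Soundness also needs more than a routine exchange: one must rule out copies with two intermediate seeds. The paper does this by a case analysis on whether $S'\cap V$ is a vertex cover, using the margins $0.28w>0.2w$, $7.6w>7.4w$, $0.8w>0.6w$ and $3\cdot0.28w>4\cdot0.2w$ (the last is where four copies are needed), before reaching the bound $7.2mw-0.14\gamma kw$. Without such a gadget and its exchange analysis, your reduction is not established.
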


We will use the following inapproximability result for the vertex cover problem.
Theorem~\ref{thm:VC} is a straightforward corollary to Dinur's PCP theorem~\cite{dinur2007pcp}, which states that max-3SAT is APX-hard for instances where each variable appears a constant number of times.
Based on Dinur's PCP theorem, Theorem~\ref{thm:VC} follows from the well-known reduction from 3SAT to vertex cover.

\begin{theorem}\textup{\textbf{(APX-Hardness of Vertex Cover)}}\label{thm:VC}
    There exist a universal constant integer $d$ and a universal constant $\gamma\in(0,1)$ such that, given an even integer $k$ and an undirected graph $G=(V,E)$ where the degree of each vertex is bounded by $d$ and $|V|=\frac32k$, it is NP-hard to distinguish between the following two cases:
    \begin{itemize}
        \item YES: $G$ has a vertex cover of size $k$;
        \item NO: all vertex covers of $G$ have sizes at least $(1+\gamma)k$.
    \end{itemize}
\end{theorem}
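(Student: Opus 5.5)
The plan is to apply the textbook reduction from 3SAT to vertex cover (clause triangles plus conflict edges) to a gap instance of max-3SAT with bounded occurrences, and then read off the three claims: $|V|=\frac32k$, bounded degree, and a multiplicative gap.

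\textbf{Source problem.} By Dinur's PCP theorem~\cite{dinur2007pcp} (combined with the standard occurrence-reduction step), there are universal constants $B$ and $\varepsilon\in(0,1)$ with the following property. Given a 3CNF formula $\varphi$ with $m$ clauses in which every variable occurs at most $B$ times, it is NP-hard to distinguish between $\varphi$ being satisfiable and every assignment satisfying at most $(1-\varepsilon)m$ clauses. I first normalize $\varphi$ so that every clause has exactly three literal slots. A clause with fewer literals is padded by repeating one of its literals; this changes neither satisfiability nor the number of satisfied clauses, and it at most triples the occurrence bound. Call the new bound $B'$.

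\textbf{Construction.} For each clause I create a triangle of three vertices, one per literal slot. I then add an edge between every pair of slots, across all clauses, that carry complementary literals $x$ and $\bar x$. This gives $|V|=3m$, and I set $k=2m$, which is automatically even and satisfies $|V|=\frac32k$. Each vertex has two triangle neighbours and at most $B'$ neighbours carrying the complementary literal. Hence the degree is bounded by $d=B'+2$, a universal constant.

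\textbf{Correctness.} I would establish the standard correspondence: the maximum independent set of this graph has size exactly the maximum number of simultaneously satisfiable clauses.
\begin{itemize}
\item Given an assignment, pick one true slot from each satisfied clause. No two picked slots share a triangle, and no two are complementary, so they form an independent set.
\item Conversely, an independent set contains at most one vertex per triangle and never contains both $x$ and $\bar x$. Setting its literals to true (and other variables arbitrarily) is a consistent assignment that satisfies every clause the set touches.
\end{itemize}
Since the minimum vertex cover equals $|V|$ minus the maximum independent set, the two cases follow.
\begin{itemize}
\item YES: if $\varphi$ is satisfiable, there is a vertex cover of size $3m-m=2m=k$.
\item NO: otherwise every vertex cover has size at least $3m-(1-\varepsilon)m=(2+\varepsilon)m=(1+\varepsilon/2)k$. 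This gives $\gamma=\varepsilon/2$.
\end{itemize}

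\textbf{Main obstacle.} The only non-routine point is the form of the source hardness: we need a constant gap together with bounded occurrences and at most three literals per clause. I would take this as the standard consequence of Dinur's theorem followed by expander-based occurrence reduction, which preserves a constant gap. The remaining care is checking that padding by repeated literals keeps both the gap and the degree bound constant. It does, because duplicate slots of the same literal are not joined by conflict edges and only increase the occurrence count by a constant factor.
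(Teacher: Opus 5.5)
Your proposal is correct and follows the same route as the paper. Both arguments use a bounded-occurrence gap 3SAT instance obtained from Dinur's theorem, a triangle per clause plus edges between complementary literals, $k=2m$, degree at most (occurrence bound)$+2$, and $\gamma=\varepsilon/2$. Your padding of short clauses to exactly three slots and your explicit proof of the converse direction (independent set $\Rightarrow$ consistent assignment) are welcome additions, since the paper uses that direction only implicitly in the NO case, but they do not change the argument.
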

\begin{proof}
We prove this theorem by reducing the hard instance of 3SAT, derived from Dinur's PCP theorem, to the Vertex Cover problem on bounded-degree graphs.

First, by Dinur's PCP theorem, there exists a constant $\epsilon_0 > 0$ such that it is NP-hard to distinguish whether a 3CNF formula $\phi$ is satisfiable (YES) or if at most $(1-\epsilon_0)$ fraction of clauses in $\phi$ can be satisfied (NO), even if each variable appears in at most $d'$ clauses (where $d'$ is a universal constant).

Then we construct a reduction from this 3SAT instance $\phi$ to a graph $G=(V,E)$. Let $\phi$ have $n'$ variables and $m'$ clauses. The construction is as follows. For each clause $C_j = (l_1 \lor l_2 \lor l_3)$, create a triangle of three nodes $c_{j,1}, c_{j,2}, c_{j,3}$.
We connect $c_{j_1,t_1}$ and $c_{j_2,t_2}$ (where $j_1,j_2\in\{1,\ldots,m'\}$ and $t_1,t_2\in\{1,2,3\}$) if the corresponding two literals are exactly $x_i$ and $\neg x_i$ for the same variable $x_i$.
Finally, set $k=2m'$.
The graph $G=(V,E)$ constructed satisfies that 1) $|V|=3m'=\frac32k$, and 2) the degree of each vertex is at most $2+d'$, which satisfies the descriptions in the theorem statement.

If $\phi$ is a yes instance, we have a satisfying assignment of $\phi$.
Under this assignment, each clause has at least one literal that evaluates to true.
We choose one such representative literal from each clause, and we obtain an independent set of size $m'$.
Its complement forms a vertex cover of size $2m'=k$.

If $\phi$ is a no instance, any assignment leaves at least $\epsilon_0m'$ clauses unsatisfied. 
As a result, the maximum independent set has size at most $(1-\epsilon_0)m'$.
Thus, the minimum vertex cover has size at least $2m'+\epsilon_0m'=(1+\frac{\epsilon_0}2)k$.
Setting $\gamma=\frac{\epsilon_0}2$ concludes the theorem.
\end{proof}

As a few remarks, the theorem above requires that the degrees of the vertices are bounded by a constant $d$, which also implies $|E|=O(|V|)$ ($|E|\leq\frac d2|E|$ to be more precise).
In addition, the optimal size of the vertex cover in the yes instance is a constant fraction of the total number of vertices: $k$ versus $|V|=\frac32k$.


We prove Theorem~\ref{thm:apx-hardness} in the remaining part of this section.

\subsection{Construction and Preliminary Observations}
Given a Vertex Cover instance $(G=(V,E),k)$ with $|V|=n=\frac32k$ and $|E|=m\leq\frac d2n$ that satisfies the description in Theorem~\ref{thm:VC}, we construct an influence maximization instance $G'$ where the underlying graph is a DAG. The core idea of the construction is to design a local gadget for every edge in the original graph, such that the selection of endpoints in the original graph directly corresponds to the influence gain obtainable within the gadget.

\begin{figure}[htbp]
    \centering
    \includegraphics[width=0.7\linewidth]{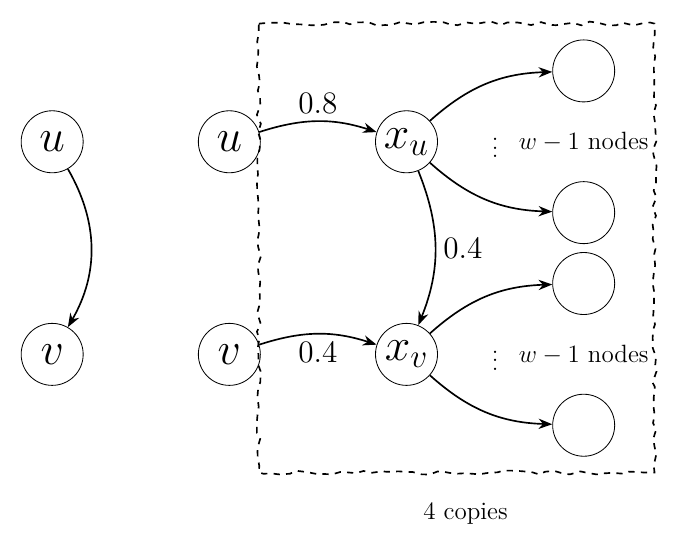}
    \caption{Reduction gadget: Converting an edge $(u,v)$ of the original graph into a gadget for the influence maximization instance.}
    \label{fig:reduction}
    \vspace{-0.5cm}
\end{figure}

Specifically, for each undirected edge $(u,v) \in E$ in the original graph, we replace the edge by the gadget shown in \Cref{fig:reduction} into $G'$. The gadget contains two intermediate nodes $x_u$ and $x_v$, corresponding to the two endpoints of the edge $(u,v)$. A directed edge with weight $b_1 = 0.8$ is added from $u$ to $x_u$, a directed edge with weight $b_3 = 0.4$ is added from $v$ to $x_v$, and a directed edge with weight $b_2 = 0.4$ is added from $x_u$ to $x_v$. Furthermore, $w-1$ leaf nodes are attached to both $x_u$ and $x_v$; these leaf nodes are activated with probability $1$ whenever their respective parent node is activated, thereby amplifying the influence differences within the gadget.
Finally, this gadget is duplicated three more times (the vertices $u$ and $v$ in the original graph are not duplicated, the remaining part of the gadget has four identical copies).
By repeating this process for all edges, we obtain the influence maximization instance $G'$, which is globally a directed acyclic graph.

Finally, we set the number of seeds $k'$ to be $k'=k+4m$.
We obtain an \InfMax instance $(G',k')$.

Below, we list some preliminary observations.
\begin{enumerate}
    \item If we choose both intermediate-layer vertices $x_u$ and $x_v$ as seeds, all vertices in each copy of the gadget are infected with probability $1$. The expected number is $2w$.
    \item If one or both of $u$ and $v$ is/are seeded and we carefully choose exactly one of $x_u,x_v$ as a seed, then the expected number of infected vertices in each copy of the gadget reaches $1.8w$. To see this, if $u$ is seeded, we choose $x_v$ as a seed, then $x_u$ is infected with probability $0.8$ and $x_v$ is infected with probability $1$; if $v$ is seeded, we choose $x_u$ as a seed, then $x_u$ is infected with probability $1$ and $x_v$ is infected with probability $0.4+0.4=0.8$; if both $u$ and $v$ are seeded, the choice of $x_u$ and $x_v$ can be arbitrary. In all cases, the expected number of infected vertices is $1.8w$.
    \item If one or both of $u$ and $v$ is/are seeded and none of $x_u$ and $x_v$ is seeded, the expected number of infected vertices in each copy of the gadget is at most $1.52w$. To see this, even if both $u$ and $v$ are seeded, without seeding $x_u$ and $x_v$, $x_u$ is infected with probability $0.8$, and $x_v$ is infected with probability $0.4+0.8\times0.4=0.72$. The expected number of infected vertices is at most $(0.8+0.72)w=1.52w$.
    \item If none of $u$ and $v$ is seeded (meaning that the edge $(u,v)$ in the vertex cover instance is uncovered) and we are to choose one seed from $x_u$ and $x_v$, the best choice is $x_u$, and the expected number of infected vertices is $(1+0.4)w=1.4w$.
\end{enumerate}

\paragraph{Completeness.}
Suppose there exists a vertex cover $S$ of size $k$ in the original graph $G$. We construct a seed set $S'$ in $G'$ as follows: First, include all nodes from $S$ in $S'$. Second, for each edge $(u,v) \in E$, since $S$ is a vertex cover, at least one endpoint belongs to $S$. If $u \in S$, we add node $x_v$ to $S'$; if $v \in S$, we add node $x_u$ to $S'$. If both $u,v \in S$, we choose one arbitrarily. We do this for all four copies in the gadget.
The number of seeds chosen is exactly $k'=k+4m$.
By Observation 2, each copy of the gadget has $1.8w$ infected vertices in expectation.
Since each gadget has four copies and there are $m$ edges, the total expected number of infected vertices is
$k+4m\cdot 1.8w>7.2mw.$

\paragraph{Soundness.}
Now, suppose the size of any vertex cover in $G$ is at least $(1+\gamma)k$.
We consider the optimal choice $S'$ of the seed set in the \InfMax instance $(G',k')$, and we aim to show that $\sigma(S')$ is at most $(1-\delta)7.2mw$ for some constant $\delta$. 
We first give some characterizations for $S'$.

First of all, notice that $S'$ cannot contain a seed from those $w-1$ leaf nodes (those vertices on the right-hand side of Fig.~\ref{fig:reduction}): seeding the corresponding $x_u$ or $x_v$ is much better.
Next, we show that there cannot exist a copy of the gadget where both intermediate vertices $x_u$ and $x_v$ are seeded in $S'$.

\paragraph{Sub-optimality of Doubly-Seeding.}
Suppose there exists a vertex pair $(u,v)$ such that both $x_u$ and $x_v$ are seeded in one copy of the gadget.
We will show that such a seeding strategy is always sub-optimal.
We consider two cases regarding the set of seeds chosen among the vertices in the original vertex cover instance, i.e., $S'\cap V$:
\begin{itemize}
    \item Case 1: $S'\cap V$ is a vertex cover of $G$, and
    \item Case 2: $S'\cap V$ is not a vertex cover of $G$.
\end{itemize}

For Case 1, we must have $|S'\cap V|\geq(1+\gamma)k$.
Since $k'=k+4m$, we can choose at most $4m-\gamma k<4m$ vertices among those intermediate vertices in the gadgets.
If there is one copy $(u,v)$ where both $x_u$ and $x_v$ are seeded, then there must exist another copy $(u',v')$ where none of $x_{u'}$ and $x_{v'}$ is seeded.
By replacing a seed from the former copy with a seed in the latter copy, the expected number of infected vertices in the graph increases.
To see this, by Observations 1 and 2, the extra seed in the former copy has a $0.2w$ marginal influence; by Observations 2 and 3, adding one seed to the latter copy increases the expected influence by at least $1.8w-1.52w=0.28w$.
Hence, the exchange gives an increment of $0.08w$ in expected influence.
Thus, doubly-seeding is always sub-optimal under Case~1.

We then consider Case 2.
Consider an arbitrary uncovered edge $(u,v)$ in the original graph.
We first show that all the $8$ intermediate vertices in the gadget must be seeded.
Suppose this is not the case.
If exactly $7$ intermediate vertices in the gadget are seeded, then replacing one of them by either $u$ or $v$ improves the expected influence.
Before the replacement, by Observations 1 and 4, the expected number of infected vertices in the gadget is $2w+2w+2w+1.4w=7.4w$.
After the replacement, by Observations 1 and 2, the expected number of infected vertices in the gadget is $2w+2w+1.8w+1.8w=7.6w$.
Thus, this replacement is beneficial.
If at most $6$ intermediate vertices in the gadget are seeded, then replacing any double seed (from this gadget or from another gadget) by either $u$ or $v$ improves the expected influence.
By adding one of $u$ and $v$ as seeds and by Observation 2 and 4, the improvement of influence within this gadget is at least $2(1.8w-1.4w)=0.8w$.
On the other hand, the loss in the expected influence due to selecting one less double seed is at most $2w-1.4w=0.6w$ (by Observation 1 and 4).
This replacement is also beneficial.

Next, if there are at least $3$ copies (from the same gadget, or from different gadgets) where none of the two intermediate vertices is seeded, we show that the existence of an uncovered edge $(u,v)$ where all the $8$ intermediate vertices are seeded makes $S'$ sub-optimal.
We can instead select only $4$ intermediate vertices such that each of the four copies contains only one seeded intermediate vertex, and, for the $4$ more seeds, we use one of them to seed $u$ or $v$, and the other $3$ to seed the intermediate vertices from the $3$ copies where no intermediate seed is chosen.
To show that this adjustment is beneficial, we first compute the loss in the expected influence within the gadget $(u,v)$.
Before the adjustment, by Observation 1, the expected influence is $2w\times 4=8w$.
After the adjustment, by Observation 2, it is $1.8w\times 4=7.2w$, the loss is $0.8w$.
On the other hand, for the $3$ seeds put in the other copies, each of them introduces a marginal influence of at least $1.8w-1.52w=0.28w$.
Therefore, the overall gain is $0.28w\times 3=0.84w>0.8w$.
This adjustment is beneficial.

Finally, it remains the case where there are at most $2$ copies where none of the two intermediate vertices is seeded, and, in addition, for every edge $(u,v)$ in the original graph that is not covered, all the $8$ intermediate vertices are seeded.
It is easy to see that, in this case, we have used more than $k'=k+4m$ seeds, leading to a contradiction.
This can be proved by a single counting argument.
Consider the following rearrangement of the seeds: choose any uncovered edge $(u,v)$; among the $8$ intermediate seeds, choose (at most) $2$ of them to compensate for the (at most) $2$ copies where no intermediate vertices are seeded; after this, for each uncovered edge $(u,v)$, at least $6$ intermediate vertices are still seeded; then we replace one of the intermediate seed by either $u$ or $v$ to make $(u,v)$ covered.
By this rearrangement, all the edges are now covered, and each copy still contains at least one intermediate seed.
The total number of seeds is at least $(1+\gamma)k+4m>k'$.
This concludes the analysis of Case 2.

\paragraph{Bounding $\sigma(S')$.}
Now we compute an upper bound to the expected influence of $S'$.
First of all, since we have ruled out double-seeding, in each of the four copies in each gadget, the expected number of infected vertices is at most $1.8w$.

We consider two cases regarding $S'\cap V$:
\begin{itemize}
    \item Case 1: $|S'\cap V|>(1+\frac\gamma2)k$, and
    \item Case 2: $|S'\cap V|\leq (1+\frac\gamma2)k$.
\end{itemize}

For the first case, given $k'=k+4m$, there are at least $\frac\gamma2k$ copies where no intermediate vertex is seeded.
Even if each of the remaining $4m-\frac\gamma2k$ copies has the maximum number $1.8w$ of infected vertices, the expected total number of infected vertices in the gadgets is at most
\begin{equation}\label{eqn:soundness1}
    1.8w\cdot\left(4m-\frac\gamma2k\right)+\frac\gamma2k\cdot1.52w=7.2mw-0.14\gamma kw.
\end{equation}

For the second case, since a vertex cover has size at least $(1+\gamma)k$, at least $\frac\gamma2k$ edges are not covered.
Without double-seeding, the expected number of infected vertices in each gadget is at most $1.4w\times 4=5.6w$ (Observation 4).
The expected total number of infected vertices in the gadgets is at most
\begin{equation}\label{eqn:soundness2}
    1.8w\cdot4\cdot\left(m-\frac\gamma2k\right)+5.6w\cdot\frac\gamma2k=7.2mw-0.8\gamma kw.
\end{equation}

Combining (\ref{eqn:soundness1}) and (\ref{eqn:soundness2}), the upper bound to the expected number of infected vertices in the gadgets is
$
    7.2mw-0.14\gamma kw.
$
By making $w$ sufficiently large, e.g., $w=n^2$, the number of infected vertices outside those gadgets, i.e., those $n$ vertices in the original graph, is negligible.
We can conclude that the expected number of infected vertices in the entire graph is at most $7.2mw-0.14\gamma kw+o(w)$.

Finally, we conclude Theorem~\ref{thm:apx-hardness} by showing that the gap between $7.2mw$ and $7.2mw-0.14\gamma kw+o(w)$ is constant.
Noticing that $\gamma$ is a constant, we only need to show that $k>cm$ for some constant $c$.
This is true, as we have $k=\frac23n$ and $m\leq\frac d2n$.

\section{Influence Maximization on Out-Arborescence}\label{subsec:out-arborescence}

Let $T=(V, E)$ be an out-arborescence where edges are directed from parents to children. In the case of out-arborescences, we demonstrate that the unique path property facilitates a polynomial-time solution via dynamic programming. 
 \begin{theorem}\label{thm:out}
     For the influence maximization problem on out-arborescences, given any seed set $S$, we always have $\sigma_\IC(S)=\sigma_\LT(S)$. In addition, there exists a polynomial-time algorithm to compute $S$ that maximizes $\sigma_\IC(S)=\sigma_\LT(S)$.
 \end{theorem}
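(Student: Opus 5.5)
The proof splits into two parts: equivalence of the two models, then a dynamic program.

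\textbf{Equivalence.} In an out-arborescence every non-root vertex $v$ has exactly one in-neighbor, its parent $p(v)$. In the \LT model, a non-seed $v$ becomes active iff $p(v)$ is active and $\theta_v\le p_{(p(v),v)}$. This event has probability $p_{(p(v),v)}$, independently over vertices since the thresholds are independent. In the \IC model, $v$ becomes active iff $p(v)$ is active and the single coin flip on edge $(p(v),v)$ succeeds. This also has probability $p_{(p(v),v)}$, independently over edges. Coupling each threshold event with the corresponding coin flip gives $\Gamma_\IC(S)$ and $\Gamma_\LT(S)$ the same distribution, so $\sigma_\IC(S)=\sigma_\LT(S)$.

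Concretely, for $v\notin S$ let $s$ be the nearest seed on the root-to-$v$ path (if one exists). Then $v$ is active with probability equal to the product of edge weights from $s$ to $v$, and with probability $0$ if no seed lies on the path. Summing over $v$ gives $\sigma(S)$.

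\textbf{Dynamic program.} This formula means that the contribution of the subtree $T_v$ depends only on the seeds inside $T_v$ and on the activation probability of $p(v)$. That probability is the product of weights from the nearest seeded ancestor, and it can take at most $\mathrm{depth}(v)\le n$ distinct values. So I index states by the nearest seeded ancestor $a$ (or ``none''). Define $f(v,j,a)$ as the maximum expected number of infected vertices in $T_v$ using exactly $j$ seeds in $T_v$, given that $a$ is the closest seeded proper ancestor of $v$. There are two options at $v$:
\begin{itemize}
\item Seed $v$. The contribution is $1$ plus the best split of $j-1$ seeds among the children $c$, each evaluated as $f(c,\cdot,v)$.
\item Do not seed $v$. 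The contribution is $\pi(a,v)$, the path product from $a$ to $v$ (or $0$ if $a$ is ``none''), plus the best split of $j$ seeds among the children, each evaluated as $f(c,\cdot,a)$.
\end{itemize}
The children splits are combined by the standard knapsack-style merge over children. Sequential merging over the children costs $O(n^2)$ per vertex–ancestor pair, since the sum of subtree sizes bounds the work. The total is polynomial, roughly $O(n^4)$ or better with careful accounting. The answer is $\max_{j\le k} f(r,j,\text{none})$, and the optimal $S$ is recovered by backtracking.

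\textbf{Main obstacle.} The step that needs care is justifying that the state $(v,j,a)$ suffices, i.e., that the objective decomposes additively across children given $a$. This follows from the per-vertex formula above: each vertex's activation probability depends only on its nearest seeded ancestor, and by linearity of expectation the children's subtrees contribute independently. The remaining work, the merge and its complexity bound, is routine.
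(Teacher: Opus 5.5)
Your proposal is correct and follows essentially the paper's argument. The equivalence rests on every non-root vertex having a single in-edge: your coupling makes explicit what the paper's omitted path-product lemma asserts, and it even gives equality in distribution of the active sets. Your DP state $(v,j,a)$ is the paper's $I(v,k,p)$ with $p=\pi(a,v)$, indexed by the nearest seeded ancestor, which is exactly why the paper can bound the number of needed $p$-values by the depth. The only other difference is that you merge children knapsack-style instead of binarizing the tree with virtual nodes; this is an equivalent implementation choice, and you also explicitly recover $S$ by backtracking.
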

 

We first present the following lemma, which establishes the equivalence between the two models.
The proof is straightforward from the definitions of the two models, and is thus omitted.

\begin{lemma}\label{lemma:propagation-probability}
In an out-arborescence $T$ under both the \IC and \LT models, for any nodes $u$ and $v$, if there exists a unique directed path $p = \langle u = w_0, w_1, \ldots, w_k = v \rangle$, then
$
h_{u,v} = \prod_{i=0}^{k-1} P_{(w_i, w_{i+1})}
$
where $P_{(w_i, w_{i+1})}$ represents the weight (propagation probability) of edge $(w_i, w_{i+1})$ and $h_{u,v}$ denote the probability that $u$ activates $v$ (if $u$ is the unique seed in $T$). If no directed path exists, $h_{u,v} = 0$.
\end{lemma}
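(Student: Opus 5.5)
We prove Lemma~\ref{lemma:propagation-probability} by induction on the length $k$ of the unique path from $u$ to $v$, handling the two models separately but through the same structural fact. In an out-arborescence every vertex other than the root has exactly one in-neighbor, namely its parent. Consequently, when $u$ is the only seed, a vertex $w_{i+1}\neq u$ can only be activated through the edge $(w_i,w_{i+1})$ from its parent, and only after $w_i$ itself has become active. If there is no directed path from $u$ to $v$, then $v$ is not a descendant of $u$, and no activation can ever reach it. This is immediate in both models, since activation only travels along directed edges starting from $u$. Hence $h_{u,v}=0$.

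\textbf{Base case and the event structure.} For $k=0$ we have $v=u$, the empty product equals $1$, and $u$ is active with probability $1$. For the inductive step, fix the path $\langle w_0,\dots,w_k\rangle$. The event that $w_{k}$ is active is exactly the event that $w_{k-1}$ is active \emph{and} the single edge $(w_{k-1},w_k)$ succeeds. The key point is to check that these two factors are independent and that the edge succeeds with conditional probability $P_{(w_{k-1},w_k)}$.

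\textbf{IC model.} The coin flip on the edge $(w_{k-1},w_k)$ is used only once, namely when $w_{k-1}$ becomes newly active. It is independent of all coins on edges of the path from $u$ to $w_{k-1}$, which are the only coins that determine whether $w_{k-1}$ becomes active. Therefore
\[
\Pr{w_k \text{ active}} = P_{(w_{k-1},w_k)}\cdot \Pr{w_{k-1}\text{ active}},
\]
and the inductive hypothesis gives the product formula.

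\textbf{LT model.} Since $w_k$ has the unique in-neighbor $w_{k-1}$, the weight condition $\sum_{u'\in N(w_k)}p_{(u',w_k)}\le 1$ holds automatically. Moreover, $w_k$ becomes active if and only if $w_{k-1}$ is active and $P_{(w_{k-1},w_k)}\ge\theta_{w_k}$. The threshold $\theta_{w_k}$ is uniform on $[0,1]$, so this inequality holds with probability $P_{(w_{k-1},w_k)}$. Because $\theta_{w_k}$ is sampled independently of the thresholds $\theta_{w_1},\dots,\theta_{w_{k-1}}$, which alone determine whether $w_{k-1}$ is active, the same recurrence as in the IC case holds.

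\textbf{Main subtlety.} The only step needing care is the independence claim: we must argue that whether $w_{k-1}$ is active depends only on randomness attached to edges or vertices strictly on the path $w_1,\dots,w_{k-1}$. This follows from the unique-parent property and a short induction. Everything else is routine bookkeeping.
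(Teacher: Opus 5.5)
Your proof is correct and is essentially the argument the paper has in mind; the paper omits the proof as ``straightforward from the definitions of the two models.'' Your induction along the unique path uses the single-in-neighbor property of out-arborescences together with the independence of the edge coins (\IC) or of the thresholds (\LT), which supplies exactly those omitted details.
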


\Cref{lemma:propagation-probability} gives an effective way to compute the probability of infection for each node $u$ given a seed set $S$.
If $u$ does not have any seeded ancestor, then $u$ is infected with probability $0$.
Otherwise, $u$'s infection probability only depends on the position of the lowest seeded ancestor $s$ (any other seeds, including seeds on the other branches, or ancestor of $s$, play no role in infecting $u$).
Then, $u$'s infection probability is $h_{s,u}$ according to \Cref{lemma:propagation-probability}.
We can compute this for all vertices, and the expected total number of infected vertices is just a summation over all vertices' infection probabilities by linearity of expectation.
This gives an effective way to compute $\sigma(S)$ for a seed set $S$.
Moreover, this holds for both the \IC and \LT models.
Therefore, the first part of Theorem~\ref{thm:out} concludes.

We now consider \InfMax on an out-arborescence: Given $T=(V,E)$, edge weights $\{b_{(u,v)}\}$, and a budget $K$, select $K$ seeds to maximize the expected influence.
We first convert $T$ to a \emph{binary tree} by introducing additional \emph{virtual} nodes.
For example, if $u$ has three children $v_1,v_2,v_3$, we introduce an extra node $u'$ such that $u$ has two children $u'$ and $v_3$, and then $u'$ is the parent of $v_1$ and $v_2$.
Set $p_{u',v_1}$ and $p_{u',v_2}$ to be $p_{u,v_1}$ and $p_{u,v_2}$ in the original graph.
Set $p_{u,u'}=1$.
We can do this recursively for a node with more than three children.
Notice that those virtual nodes cannot be seeded. 
In the remaining part of this section, we assume $T$ is a binary tree with virtual nodes.

\begin{definition}[Dynamic Programming State]\label{def:dp-state}
For any node $v \in V$ in tree $T=(V,E)$, define the DP state as $I(v,k,p)$,
representing the maximum expected influence obtainable within the subtree rooted at $v$, given that at most $k$ seeds are selected within this subtree and node $v$ receives an external influence intensity $p \in [0,1]$ from its ancestors.
\end{definition}

The parameter $p$ captures the influence transmitted from the parent or the ancestors to $v$. If $v$'s parent $u$ is selected as a seed, $p = p_{(u,v)}$. If $u$ is not a seed but is activated by its own ancestors, the influence decays along the edge, i.e., $p = p_{(u,v)} \cdot p_u$. For the root, $p=0$.
We distinguish two mutually exclusive cases for any internal node $v$:

\textbf{$v$ is not selected as a seed ($I_{\mathrm{no}}$).} The activation of $v$ depends entirely on the external influence $p$, so $v$ contributes $p$ to the expected influence if $v$ is not virtual. All $k$ seeds must be distributed among its children's subtrees. For a binary tree with children $v_l$ and $v_r$:
\[
I_{\mathrm{no}}(v,k,p) = \max_{0\le i\le k} \Bigl\{ I\bigl(v_l,i,p\cdot b_{(v,v_l)}\bigr) + I\bigl(v_r,k-i,p\cdot b_{(v,v_r)}\bigr) + p\cdot\mathbb{I}(v\mbox{ is not virtual}) \Bigr\},
\]
where $\mathbb{I}$ is the indicator function.

\textbf{$v$ is selected as a seed ($I_{\mathrm{sel}}$).} Node $v$ is then not virtual and is activated with certainty, contributing 1. The remaining $k-1$ seeds are distributed in the subtrees. The children receive influence directly from seed $v$:
\[
I_{\mathrm{sel}}(v,k,p) = \max_{0\le i\le k-1} \Bigl\{ I\bigl(v_l,i,b_{(v,v_l)}\bigr) + I\bigl(v_r,k-1-i,b_{(v,v_r)}\bigr) + 1 \Bigr\}.
\]

The complete transition equation is:
\begin{equation}\label{eq:dp-transition}
I(v,k,p) = 
\left\{\begin{array}{ll}
    \max\bigl\{ I_{\mathrm{no}}(v,k,p),\; I_{\mathrm{sel}}(v,k,p) \bigr\} &  \mbox{if }v\mbox{ is not virtual}\\
   I_{\mathrm{no}}(v,k,p)  & \mbox{otherwise}
\end{array}
\right..
\end{equation}

Then we define the boundary conditions for a leaf node $v$: $I(v,k,p)=1$ if $k\geq1$, and $I(v,k,p)=p$ otherwise.

Finally, to ensure the validity of the dynamic programming algorithm, we need to make sure that we only need to compute a polynomial number of $I(v,k,p)$.
For the third parameter $p$, we need to make sure, for each node $v$, only a polynomial number of values for $p$ is needed.
To see this, we need to compute $I(\mbox{root},K,0)$ at the end of the day.
For a vertex $u$ at the second level, we only need $I(u,k,p)$ for $k=0,1,\ldots,K$ and $p\in\{0,p_{\mbox{root},u}\}$.
For a vertex $u$ at the third level with parent $v$, we only need $I(u,k,p)$ for $k=0,1,\ldots,K$ and $p\in\{0,p_{v,u},p_{v,u}\cdot p_{\mbox{root},u}\}$.
Similarly, for a vertex $u$ at the $i$-th level, we only need $(i+1)$ values for $p$.
Therefore, the number of values of $p$ needed for each $I(v,k,p)$ is bounded by the depth of $v$, which is bounded by $O(n)$.
The total number of values for $I(v,k,p)$ is bounded by $O(n^2K)$.
Computing each recurrence relation requires $O(K)$ time (for enumerating $i$).
Thus, the overall time complexity is $O(n^2K^2)$.


\subsection{Details and Correctness of the Algorithm for Out-Arborescence}\label{sec:corr}

Our dynamic programming algorithm is presented in Algorithm~\ref{alg:DPIM_recursive}.

\begin{algorithm}[!h]
\caption{Recursive Dynamic Programming with Memoization}
\label{alg:DPIM_recursive}

\SetKwProg{Fn}{Function}{ is}{end}
\SetKwFunction{FCompute}{Compute-I}

\BlankLine
\KwIn{Out-arborescence $T=(V,E)$, budget $K$, edge weights $B$}
\KwOut{Maximum expected influence: \FCompute{$root, K, 0$}}
\BlankLine

Initialize a global hash map $\mathit{Memo} \gets \emptyset$\;
\BlankLine

\Fn{\FCompute{$v, k, p$}}{
    \tcp{Check if the state $(v, k, p)$ has been computed}
    \If{$(v, k, p) \in \mathit{Memo}$}{
        \Return $\mathit{Memo}[(v, k, p)]$\;
    }
    
    \eIf{$v$ is a leaf}{
        \tcp{Base case: influence at a leaf node}
        $res = (k \geq 1) ? 1 : p$\;
    }{
        \tcp{Case 1: $v$ is not selected as a seed}
        $I_{\text{no-select}} = \max_{0 \le i \le k} \{ \FCompute{$v_l, i, p \cdot b_{(v,v_l)}$} + \FCompute{$v_r, k-i, p \cdot b_{(v,v_r)}$} + p\cdot\mathbb{I}(v\mbox{ is not virtual}) \}$\;
        
        \tcp{Case 2: $v$ is selected as a seed}
        \If{$k \ge 1$}{
            $I_{\text{select}} = \max_{0 \le i \le k-1} \{ \FCompute{$v_l, i, b_{(v,v_l)}$} + \FCompute{$v_r, k-1-i, b_{(v,v_r)}$} + 1 \}$\;
        }
        $res = \max \{ I_{\text{no-select}}, I_{\text{select}} \}$ if $v$ is not virtual and $k\geq1$, and $res=I_{\text{no-select}}$ otherwise\;
    }
    
    \tcp{Store the computed result in Memo}
    $\mathit{Memo}[(v, k, p)] \gets res$\;
    \Return $res$\;
}

\BlankLine
\Return{\FCompute{$root, K, 0$}}\;
\end{algorithm}

The following lemma proves the correctness of our algorithm.

\begin{lemma}[Correctness of \Cref{alg:DPIM_recursive}]\label{theorem:dpim-correctness}
For any out-arborescence $T$ and budget $K$, \Cref{alg:DPIM_recursive} returns the maximum expected influence.
\end{lemma}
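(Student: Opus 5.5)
We prove the lemma by structural induction on the (binarized) tree. The inductive claim is that for every node $v$, every $k\ge 0$ and every $p\in[0,1]$, the value \FCompute{$v,k,p$} equals
\[
F(v,k,p)=\max_{S\subseteq T_v\setminus\mathrm{Virt},\,|S|\le k}\ \sum_{x\in T_v\setminus \mathrm{Virt}} \Pr{x \text{ infected} \mid S,\ v \text{ externally reached with intensity } p}.
\]
Here $T_v$ is the subtree rooted at $v$ and $\mathrm{Virt}$ is the set of virtual nodes. The phrase ``externally reached with intensity $p$'' is made precise through \Cref{lemma:propagation-probability}. It means that if $x\in T_v$ has no seeded ancestor inside $T_v$, its infection probability is $p\cdot h_{v,x}$. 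Otherwise its infection probability is $h_{s,x}$, where $s$ is the lowest seeded ancestor of $x$ in $T_v$, with $h_{s,s}=1$.

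Two preliminary facts are needed before the induction.

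First, the binarization preserves the objective. A virtual node $u'$ is reached from $u$ through an edge of weight $1$, so for every original vertex $x$ the path product $h_{u,x}$ is unchanged. Virtual nodes can be neither seeded nor counted. Hence seed sets of the two trees correspond bijectively and have equal $\sigma$.

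Second, $\sigma(S)$ decomposes over the subtrees. By the discussion after \Cref{lemma:propagation-probability}, the infection probability of $x$ depends only on its lowest seeded ancestor, or on the external intensity if it has none. Therefore
\[
\sigma(S)=\sum_x \Pr{x \text{ infected}},
\]
and once the intensity entering $v_l$ (resp.\ $v_r$) is fixed, the contributions of $T_{v_l}$ and $T_{v_r}$ depend only on $S\cap T_{v_l}$ and $S\cap T_{v_r}$, respectively. The entering intensity is $b_{(v,v_l)}$ if $v\in S$ and $p\cdot b_{(v,v_l)}$ otherwise.

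The base case is a leaf $v$. Then $T_v=\{v\}$, and $F(v,k,p)$ is $1$ if we may seed $v$ ($k\ge1$) and $p$ otherwise. Since $1\ge p$, this matches the algorithm's leaf value. Leaves are never virtual.

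For the inductive step at an internal node $v$, we partition the feasible seed sets by whether $v\in S$ and by how many seeds $i=|S\cap T_{v_l}|$ lie in the left subtree.

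If $v\notin S$, the contribution of $v$ is $p\cdot\mathbb{I}(v\text{ not virtual})$. By the decomposition, the left part contributes at most $F(v_l,i,p\,b_{(v,v_l)})$ and the right part at most $F(v_r,k-i,p\,b_{(v,v_r)})$. These two maxima are attained independently, because the choices $S\cap T_{v_l}$ and $S\cap T_{v_r}$ do not interact.

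If $v\in S$ (allowed only when $v$ is non-virtual and $k\ge1$), then $v$ contributes $1$. Every descendant's lowest seeded ancestor is now $v$ or lies below $v$, so $p$ becomes irrelevant and the children receive intensities $b_{(v,v_l)}$ and $b_{(v,v_r)}$.

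Taking the maximum over all these cases gives exactly the recurrence~\eqref{eq:dp-transition}, with both inequality directions. The bound ``$\le$'' holds because every $S$ falls into some case. The bound ``$\ge$'' holds because optimal sub-solutions can be glued together, as they live in disjoint subtrees. By the induction hypothesis the recursive calls return $F$, and memoization only caches values that are already correct.

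Finally, at the root the external intensity is $0$, so $F(\mathrm{root},K,0)=\max_{|S|\le K}\sigma(S)$, which is the claim. (The polynomial bound on the number of distinct $p$ values was established before the lemma and is not needed for correctness.)

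The main obstacle is stating the decomposition precisely: the contribution of $T_{v_l}$ must depend on the rest of the seed set \emph{only} through the single number $p\cdot b_{(v,v_l)}$ or $b_{(v,v_l)}$. This is exactly where the unique-path property and the lowest-seeded-ancestor characterization from \Cref{lemma:propagation-probability} are used. For the \IC model the characterization holds because the only route into $T_{v_l}$ is the edge $(v,v_l)$, and the events along that path are independent of the coins inside $T_{v_l}$. By the first part of Theorem~\ref{thm:out}, the same formulas then hold for the \LT model.
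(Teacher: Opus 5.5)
Your proof is correct and follows the paper's argument: induction over the binarized tree, with a leaf base case and, at an internal node, a case split on whether $v$ is seeded together with enumeration of the budget split $i$. You are more explicit than the paper about what the subproblem value means (via the lowest-seeded-ancestor formula), about virtual nodes contributing nothing, and about proving both directions of the maximum, but these are refinements of the same route rather than a different one.
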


\begin{proof}
We prove the correctness of the recursive function $\textsc{Compute-I}(v, k, p)$ by induction on the height $H$ of the sub-arborescence rooted at $v$. Let $I^*(v, k, p)$ denote the true maximum expected influence in the sub-arborescence rooted at $v$ under the Independent Cascade (IC) model, given budget $k$ and parent activation probability $p$.

\textbf{Base Case:} Suppose $H = 0$, meaning $v$ is a leaf node. 
\begin{itemize}
    \item If $k \ge 1$, the optimal strategy is to select $v$ as a seed. Since a seed is activated with probability 1, the expected influence is 1. The algorithm returns $(k \ge 1) ? 1 : p$, which is 1.
    \item If $k = 0$, $v$ cannot be a seed. Its activation depends solely on the probability $p$ from its parent. The expected influence is $p$. The algorithm returns $p$.
\end{itemize}
Thus, $\textsc{Compute-I}(v, k, p) = I^*(v, k, p)$ for the base case.

\textbf{Inductive Hypothesis:} Assume that for all sub-arborescences with height less than $H$, the function $\textsc{Compute-I}$ correctly returns the maximum expected influence.

\textbf{Inductive Step:} Consider a sub-arborescence rooted at $v$ with height $H > 0$. Let $v_l$ and $v_r$ be the left and right children of $v$ with edge weights $b_{(v, v_l)}$ and $b_{(v, v_r)}$, respectively. The optimal strategy must fall into one of two cases:

\textit{Case 1: $v$ is not selected as a seed.} 
The node $v$ is activated with probability $p$. If activated, it independently attempts to activate $v_l$ and $v_r$ with probabilities $b_{(v, v_l)}$ and $b_{(v, v_r)}$. Thus, the conditional activation probabilities passed to children are $p \cdot b_{(v, v_l)}$ and $p \cdot b_{(v, v_r)}$. By the inductive hypothesis, the maximum influence from sub-arborescences rooted at $v_l$ and $v_r$ with budgets $i$ and $k-i$ are correctly computed as $\textsc{Compute-I}(v_l, i, p \cdot b_{(v, v_l)})$ and $\textsc{Compute-I}(v_r, k-i, p \cdot b_{(v, v_r)})$. The total expected influence is the sum of $v$'s activation probability $p$ and the optimal distribution of budget $k$ among its children:
\[ I_{\text{no-select}} = \max_{0 \le i \le k} \{ \textsc{Compute-I}(v_l, i, p \cdot b_{(v,v_l)}) + \textsc{Compute-I}(v_r, k-i, p \cdot b_{(v,v_r)}) + p \} \]

\textit{Case 2: $v$ is selected as a seed ($k \ge 1$).} 
The node $v$ is activated with probability 1. Consequently, the activation probabilities passed to children are simply $b_{(v, v_l)}$ and $b_{(v, v_r)}$, independent of the original $p$. With $k-1$ budget remaining, the total expected influence is:
\[ I_{\text{select}} = \max_{0 \le i \le k-1} \{ \textsc{Compute-I}(v_l, i, b_{(v,v_l)}) + \textsc{Compute-I}(v_r, k-1-i, b_{(v,v_r)}) + 1 \} \]

The algorithm computes $res = \max \{ I_{\text{no-select}}, I_{\text{select}} \}$. Since it exhaustively searches the binary choice of seeding $v$ and all possible partitions of the budget $k$ across the unique paths in the out-arborescence, and since all sub-problems are optimal by the inductive hypothesis, $res$ must equal $I^*(v, k, p)$. By the principle of mathematical induction, the algorithm is correct for any out-arborescence of finite height.
\end{proof}

\section{Influence Maximization on In-Arborescence Structures}
\label{subsec:in-arborescence}

In this section, we address the \InfMax problem under the Independent Cascade (IC) model on in-arborescence structures.  While it is known that the problem is solvable in polynomial time under the Linear Threshold (LT) model, it remains NP-hard under the IC model \cite{wang2016bharathi,lu2017solution}. We bridge this gap by establishing the following theorem:

\begin{theorem}
    There exists a Fully Polynomial Time Approximation Scheme (FPTAS) for the influence maximization problem under the IC model on an in-arborescence.
\end{theorem}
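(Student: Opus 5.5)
Our plan is to run a dynamic program over the in-arborescence, bottom-up, and make it polynomial by rounding the one real-valued quantity carried upward.

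\textbf{Setup.} We first binarize the tree as in Sect.~\ref{subsec:out-arborescence}, inserting virtual unseedable nodes with unit-weight edges; infection probabilities are unchanged. In an in-arborescence, a non-seed node $v$ with children $c_1,\dots,c_d$ is infected with probability
\[
q_v = 1-\prod_{j}\bigl(1-b_{(c_j,v)}q_{c_j}\bigr),
\]
since the subtrees are disjoint and therefore independent. A seed has $q_v=1$. The objective is $\sum_v q_v$. Unlike the out-arborescence case, information flows upward: what a subtree ``exports'' to the rest of the tree is only $q_v$, while the value contributed inside the subtree is $\sum_{u \text{ in subtree}} q_u$. This suggests the state $F(v,k,x)$, the maximum total expected influence inside the subtree of $v$ using at most $k$ seeds subject to $q_v\ge x$. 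Equivalently, we can store the Pareto frontier of pairs (influence, $q_v$) per budget. The trouble is that $q_v$ takes exponentially many values, which is consistent with the NP-hardness in \cite{lu2017solution}.

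\textbf{Rounding.} We discretize $q_v$ onto a grid. The cleanest choice is to track the failure probability $1-q_v$ on a geometric grid of powers of $(1+\eta)$, truncated at $1/\mathrm{poly}$. Alternatively, we can use an additive grid of step $\eta=\varepsilon/n^2$. The additive grid seems simpler: the objective is at least $1$ whenever $K\ge1$ (take the seed itself), so an absolute additive error of $\varepsilon$ suffices for a $(1-\varepsilon)$ ratio. At each node we combine the children's states by enumerating the budget split ($O(K)$) and the grid values of the two children ($O(1/\eta^2)$). We compute the resulting $q_v$, round it down to the grid, and record the best influence. Seeded nodes set $q_v=1$ exactly. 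The table size is $O(nK/\eta)$ and the running time is $\mathrm{poly}(n,1/\varepsilon)$.

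\textbf{Error analysis (main obstacle).} We must show that the rounding errors do not compound badly along paths. The key facts are that $q_v$ is $1$-Lipschitz in each child's $q_{c}$, because the partial derivative is $b\prod(1-\cdot)\le1$, and monotone. Consider the optimal seed set $S^*$ and its true probabilities $q^*$. We argue by induction that the DP contains, for each $v$, a state with the same seed allocation whose recorded grid value $\tilde q_v$ satisfies $q^*_v-\tilde q_v\le \eta\cdot(\text{number of nodes in the subtree of } v)$. Each node adds one rounding error of at most $\eta$, and Lipschitzness in a binary tree propagates the children's errors additively. The recorded influence is then $\sum_u \tilde q_u \ge \sigma(S^*) - n\cdot n\eta \ge \sigma(S^*)-\varepsilon$. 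In the other direction, every recorded value is a lower bound on the true value of the corresponding seed set, because rounding down and monotonicity keep each $\tilde q_u\le$ the true $q_u$. Returning the seed set that achieves the best state therefore yields true value at least $(1-\varepsilon)\mathrm{OPT}$. Making this monotone-lower-bound bookkeeping precise, with consistent rounding and traceback of the seeds, is where the care lies. The rest is routine counting.
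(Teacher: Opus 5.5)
Your proposal is correct and follows essentially the same route as the paper: binarize the tree and run a bottom-up DP over (activation probability, subtree influence) pairs, rounding down to an additive grid of step $\varepsilon/n^2$. The $1$-Lipschitz recurrence $e_v\le e_l+e_r+\delta$ then bounds the total additive error by $n^2\delta$, and $\mathrm{OPT}\ge 1$ converts this into a $(1-\varepsilon)$ ratio. The differences are minor: you round only $q_v$ and keep the best influence per grid value (a smaller table than the paper's Boolean table over rounded pairs), and you make explicit the lower-bound direction (recorded values never exceed the true value of the traced-back seed set), which the paper leaves implicit.
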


Let $T=(V,E)$ be an in-arborescence where edges are directed from children to parents.
As introduced in \cref{subsec:out-arborescence}, we first transform the in-arborescence $T$ into a binary tree by introducing auxiliary \emph{virtual nodes}. 
Consider a node $u$ with three children $v_1, v_2,$ and $v_3$. We introduce a virtual node $u'$ such that $u$ becomes the parent of $u'$ and $v_3$, while $u'$ becomes the parent of $v_1$ and $v_2$. To preserve the original propagation probabilities, we set the edge weights $(v_1, u')$ and $(v_2, u')$ equal to the original $b_{v_1,u}$ and $b_{v_2,u}$ respectively, and assign a weight of $1$ to the edge $(u', u)$. This process is applied recursively to any node with a degree greater than two. It is important to note that these virtual nodes are purely structural and cannot be selected as seeds. In the subsequent analysis, we assume $T$ refers to this transformed binary structure containing virtual nodes.

Let $v_l, v_r$ be the left and right children of $v$, with edge weights $b_l, b_r$ and activation probabilities $p_l, p_r$, correspondingly. The probability that $v$ is activated by its children is given by the non-linear combination:
\begin{equation}
    p_v = 1 - (1 - p_l \cdot b_l)(1 - p_r \cdot b_r)
\end{equation}

Within a subtree $T_v$ rooted at $v$, two quantities are of interest: the total expected influence $\varsigma(v)$ within the subtree $T_v$ and the activation probability $p_v$ of vertex $v$.
Locally, we would like to maximize $\varsigma(v)$. 
Globally, $p_v$ completely controls the influence from $T_v$ to the remaining part of the graph.

To handle the trade-off between local influence $\varsigma$ and upward activation probability $p$, we propose a dynamic programming approach based on the Pareto Frontier. Instead of a single value, we maintain a set of pairs:
\begin{equation}
    \mathcal{S}(v, k) = \{ (p, \varsigma)\}
\end{equation}
where $p$ denotes the probability of infecting $v$ and $\varsigma$ represents the expected number of infected nodes in $T_v$ is $\varsigma$. Each pair $(p, \varsigma)$ represents a strategy using at most $k$ seeds in $T_v$. 

To guarantee the size of $\mathcal{S}(v, k)$ is polynomially bounded, we need to consider roundings for $p$ and $\varsigma$.
Choose a sufficiently large integer $M$ and let $\delta=1/M$. We will round the values of $p$ and $\varsigma$ downward such that they are integer multiples of $\delta$.
Therefore, the range of $p$ is $\{0,\delta,2\delta,\ldots,1-\delta,1\}$, with size $M+1$, and the range of $\varsigma$ is a set of size $nM+1$ (since the number of infected vertices cannot exceed $n$).
Thus, $|\mathcal{S}(v, k)|\leq (M+1)(nM+1)$.
An alternative way to view $\mathcal{S}(v, k)$ is a $(M+1)\times(nM+1)$ Boolean table, where the index of each cell is a pair $(p,\varsigma)$ and the Boolean value indicates if there is a seeding strategy such that $v$ is infected with probability approximately $p$ and the expected number of infected vertices in the subtree $T_v$ is approximately $\varsigma$.

When calculating $\mathcal{S}(v, k)$ for an internal node $v$, we define two operators, $\mathcal{I}_{sel}$ and $\mathcal{I}_{no}$, representing whether $v$ is selected as a seed. We iterate through all possible budget partitions $i \in \{0, \dots, k\}$ to capture all achievable states.

\begin{enumerate}
    \item \textbf{Case 1: $v$ is selected as a seed ($\mathcal{I}_{sel}$):} This case only applies if $v$ is a \textbf{real node} and $k \ge 1$. Here, $v$ is activated with certainty ($p = 1.0$), contributing $1.0$ to the influence. For each budget split $i \in \{0, \dots, k-1\}$, we combine all pairs from children:
    \[ \varsigma_{tmp} = \varsigma_l + \varsigma_r + 1.0 \]
    The discretized pair $(1.0, \text{Round}(\varsigma_{tmp}))$ is added to $\mathcal{S}(v, k)$. We do not take the maximum here; we retain all rounded pairs to allow for optimal combinations at higher levels of the tree.

    \item \textbf{Case 2: $v$ is not selected as a seed ($\mathcal{I}_{no}$):} This applies to both real and virtual nodes. For each $i \in \{0, \dots, k\}$, we consider combinations $(p_l, \varsigma_l) \in \mathcal{S}(v_l, i)$ and $(p_r, \varsigma_r) \in \mathcal{S}(v_r, k-i)$. The activation probability is:
    \[ p_{tmp} = 1 - (1 - p_l \cdot b_l)(1 - p_r \cdot b_r) \]
    The expected influence $\varsigma_{tmp}$ is calculated as:
    \begin{itemize}
        \item If $v$ is a \textbf{real node}: $\varsigma_{tmp} = \varsigma_l + \varsigma_r + p_{tmp}$.
        \item If $v$ is a \textbf{virtual node}: $\varsigma_{tmp} = \varsigma_l + \varsigma_r$.
    \end{itemize}
    Every resulting pair $(\text{Round}(p_{tmp}), \text{Round}(\varsigma_{tmp}))$ is added to $\mathcal{S}(v, k)$.
\end{enumerate}
The complete transition for a node $v$ with budget $k$ is established by aggregating all achievable pairs from its children and applying immediate rounding to maintain a bounded state space. Formally, the set of achievable pairs $\mathcal{S}(v, k)$ is constructed as:
\begin{equation}\mathcal{S}(v, k) = \text{Round} \left( \bigcup_{0 \le i \le k} \mathcal{I}_{no}(v, i, k-i) \cup \bigcup_{0 \le i \le k-1} \mathcal{I}_{sel}(v, i, k-1-i) \right)
\end{equation}where $\text{Round}(\cdot)$ denotes the element-wise downward rounding of each pair $(p, \varsigma)$ to the discretized grid specified by $\delta$. The details of this procedure are listed in \Cref{alg:DPIM_InArb_Final}.To establish the $(1-\epsilon)$-approximation guarantee, we first prove that the recursive structure of our dynamic programming approach correctly captures all feasible $(p, \varsigma)$ pairs in the absence of discretization. We then analyze the cumulative error introduced by the rounding operations at each node. Specifically, we show that by setting the discretization step $\delta = \mathcal{O}(\epsilon/n^2)$, the total additive error across the in-arborescence is bounded by $\epsilon$, while the size of each set $\mathcal{S}(v, k)$ remains $O(nM^2)$.

\begin{lemma}[Optimality of the Recursive Structure]
    In the absence of the rounding step, \Cref{alg:DPIM_InArb_Final} returns the exact set of all achievable pairs $\mathcal{S}(v, k)$ for any subtree $T_v$ and budget $k$.
\end{lemma}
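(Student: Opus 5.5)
We argue by structural induction on the height of the subtree $T_v$, showing that without rounding the set $\mathcal{S}(v,k)$ computed by \Cref{alg:DPIM_InArb_Final} equals the set $\mathcal{A}(v,k)$ of all pairs $(p,\varsigma)$ realized by some seed set $S\subseteq T_v$ of non-virtual nodes with $|S|\le k$. Here $p$ is the probability that $v$ is infected and $\varsigma$ is the expected number of infected non-virtual nodes in $T_v$, both under the \IC model restricted to $T_v$.

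The first observation makes this restriction legitimate. In an in-arborescence all edges point toward the root, so nodes outside $T_v$ never influence nodes inside $T_v$. Hence for any global seed set $S$, the infection events inside $T_v$ depend only on $S\cap T_v$ and on the edge coins inside $T_v$. Consequently $\mathcal{A}(v,k)$ is well defined, and the global objective at the root is $\max\{\varsigma:(p,\varsigma)\in\mathcal{A}(\text{root},K)\}$. The base case is a leaf $v$. Its achievable pairs are $(0,0)$ for $k=0$, and $(0,0)$ or $(1,1)$ for $k\ge 1$; a virtual leaf does not occur. This matches the initialization.

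For the inductive step, let $v$ have children $v_l,v_r$. We prove both inclusions.

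\textbf{Soundness ($\mathcal{S}\subseteq\mathcal{A}$).} Each pair produced by the algorithm combines a pair realized by some $S_l\subseteq T_{v_l}$ with $|S_l|\le i$ and a pair realized by some $S_r\subseteq T_{v_r}$ with $|S_r|\le k-i$ (or $k-1-i$ when $v$ is seeded). We take $S=S_l\cup S_r$, plus $\{v\}$ in the seeded case.

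\textbf{Completeness ($\mathcal{A}\subseteq\mathcal{S}$).} Any feasible $S$ decomposes uniquely as $S_l\cup S_r$, possibly together with $\{v\}$, with budgets summing to at most $k$. The split $i=|S_l|$ is enumerated, and monotonicity of "at most" budgets handles the slack. By induction the children's pairs lie in $\mathcal{S}(v_l,i)$ and $\mathcal{S}(v_r,\cdot)$.

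The remaining task is to verify that the combination formulas are exact, and this is the step needing care. Since $T_{v_l}$ and $T_{v_r}$ are disjoint and share no edges, the events "$v_l$ infected" and "$v_r$ infected" are determined by disjoint sets of independent coins, so they are independent. The edges $(v_l,v)$ and $(v_r,v)$ flip fresh independent coins. Therefore, when $v\notin S$,
\[
\Pr{v \text{ infected}}=1-(1-p_l b_l)(1-p_r b_r).
\]
By linearity of expectation, $\varsigma=\varsigma_l+\varsigma_r+p\cdot\mathbb{I}(v\text{ not virtual})$. The subtree totals $\varsigma_l,\varsigma_r$ are unchanged by $v$'s status, because influence flows only upward. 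When $v\in S$, we get $p=1$ and $\varsigma=\varsigma_l+\varsigma_r+1$, and seeding $v$ does not alter anything below it.

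One subtlety is that the global value depends on $S$ only through the pair $(p,\varsigma)$ of each child. This holds because ancestors see $T_v$ solely through the infection event of $v$, and that event is independent of everything outside $T_v$. This justifies storing sets of pairs rather than seed sets, and completes the induction.
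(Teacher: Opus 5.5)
Your proof is correct and follows the same route as the paper: induction on the height of $T_v$, a case split on whether $v$ is seeded, and the observation that enumerating all budget splits together with the Cartesian product of the children's pair sets gives exactly the achievable pairs of $v$. You are more careful than the paper, which only sketches the completeness direction. You prove both inclusions, justify the formula $1-(1-p_l b_l)(1-p_r b_r)$ via independence of the disjoint subtrees, and state the leaf case correctly as $\{(0,0),(1,1)\}$ for every $k\ge 1$, whereas the paper's proof lists only $(1,1)$ for $k=1$.
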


\begin{proof}
    We prove this by induction on the height of the in-arborescence $T$.
    
    \textbf{Base Case:} Let $v$ be a leaf node. The algorithm returns $\mathcal{S}(v, 0)=\{(0.0, 0.0)\}$ and $\mathcal{S}(v, 1)=\{(1.0, 1.0)\}$. These represent the only feasible states for a leaf, which is exactly the set of all achievable pairs.
    
    \textbf{Inductive Step:} Assume that for any child $c \in \{v_l, v_r\}$, $\mathcal{S}(c, i)$ contains all exact achievable pairs.
    \begin{itemize}
        \item \textbf{Case $I_{no}$ ($v$ is not a seed):} The algorithm explores all budget partitions $i$ and all pairs $(p_l, \varsigma_l) \in \mathcal{S}(v_l, i)$ and $(p_r, \varsigma_r) \in \mathcal{S}(v_r, k-i)$. In an in-arborescence, the activation of $v$ by its children follows the independent trial formula $p_{tmp} = 1 - (1 - p_l b_l)(1 - p_r b_r)$. Since the Cartesian product covers all possible combinations from the children, the resulting set contains all possible $(p, \varsigma)$ pairs for node $v$.
        \item \textbf{Case $I_{sel}$ ($v$ is a seed):} By selecting $v$ as a seed, $p_v=1.0$. The algorithm iterates through all combinations of achievable pairs from subtrees to compute $\varsigma_{tmp} = \varsigma_l + \varsigma_r + 1$. By induction, this covers all possible expected influence values when $v$ is seeded.
    \end{itemize}
    Thus, without rounding, $\mathcal{S}(v, k)$ contains all exact achievable pairs.
\end{proof}

\begin{lemma}[Error Bound]
If the state space is discretized with a step size $\delta = 1/M$, the total additive error in the expected influence at the root is at most $n^2 \delta$.
\end{lemma}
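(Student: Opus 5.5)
We will compare the rounded dynamic program against an exact optimal strategy and track how far its rounded pair drifts from the exact pair, node by node. Fix an optimal seed set $S^*$ (using at most $K$ seeds). For each node $v$, let $(p^*_v,\varsigma^*_v)$ be the exact pair it induces on $T_v$, and let $k_v=|S^*\cap T_v|$.

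The goal is to show by induction on height that $\mathcal{S}(v,k_v)$ contains a pair $(\hat p_v,\hat\varsigma_v)$ satisfying
\[
p^*_v-\alpha_v\le \hat p_v\le p^*_v,\qquad \varsigma^*_v-\beta_v\le\hat\varsigma_v\le \varsigma^*_v,
\]
where $\alpha_v\le |T_v|\,\delta$ and $\beta_v\le |T_v|^2\delta$. At the root these give an additive loss of at most $n^2\delta$.

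Two observations drive the argument.

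\textbf{Monotonicity.} The maps $(p_l,p_r)\mapsto 1-(1-p_lb_l)(1-p_rb_r)$ and $\varsigma\mapsto\varsigma_l+\varsigma_r+p$ are nondecreasing in each argument. Since rounding is always downward, every computed value stays at most its exact counterpart. This also ensures that the value the algorithm reports corresponds to a genuinely achievable lower bound.

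\textbf{The $p$-map is $1$-Lipschitz in the sum of errors.} For $b_l,b_r\le1$,
\[
\bigl|[1-(1-x b_l)(1-y b_r)]-[1-(1-x' b_l)(1-y' b_r)]\bigr|\le |x-x'|+|y-y'|,
\]
because each partial derivative is bounded by $1$.

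The induction then runs as follows.

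\textbf{Leaves.} A leaf has no error.

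\textbf{Internal $v$, not seeded.} The probability error is at most $\alpha_{v_l}+\alpha_{v_r}+\delta$, where the extra $\delta$ comes from the new rounding. So $\alpha_v\le (|T_{v_l}|+|T_{v_r}|+1)\delta=|T_v|\delta$.

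The influence error is at most $\beta_{v_l}+\beta_{v_r}+\alpha_v+\delta$. The $\alpha_v$ term is the error in the contribution $p_{tmp}$ of $v$ itself (it is absent for virtual nodes), and the $\delta$ is the new rounding. With $a=|T_{v_l}|$ and $b=|T_{v_r}|$ this is at most
\[
(a^2+b^2+(a+b+1)+1)\delta\le (a+b+1)^2\delta,
\]
which holds since $2ab\ge0$. For the worst case $a=b=0$, the inequality $a^2+b^2+a+b+2\le(a+b+1)^2$ needs some slack. I would resolve this by using $\beta_v\le 2|T_v|^2\delta$, or by noting that leaves contribute no rounding error, and then absorb the constant.

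\textbf{Internal $v$, seeded.} Here $p=1$ exactly, so $\alpha_v=0$, and $\beta_v\le\beta_{v_l}+\beta_{v_r}+\delta$.

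The main obstacle is the compounding: the error in $p$ feeds into $\varsigma$ at every ancestor. The proof must show this compounding stays quadratic rather than exponential, and the Lipschitz bound with constant $1$ (using $b\le1$) is exactly what prevents blow-up. I also need to absorb the virtual nodes into the count. Since the binary transformation at most doubles the number of nodes, $n$ may be replaced by $2n$, adjusting the constant, which is harmless for the $\delta=O(\epsilon/n^2)$ choice. If the tight constant is required, I would instead bound the per-node contribution directly: the total error equals the sum over all nodes $u$ of the $\delta$ introduced at $u$, times the number of ancestors $u$'s probability error reaches, plus one. This is at most $n\cdot n\,\delta$.
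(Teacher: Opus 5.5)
Your argument is correct and essentially matches the paper's. Both rest on the $1$-Lipschitz bound for $1-(1-p_lb_l)(1-p_rb_r)$, which gives a probability error of at most $|T_u|\delta$ at each node. Your recursion $\beta_v\le\beta_{v_l}+\beta_{v_r}+\alpha_v+\delta$ simply unrolls to the paper's sum $\sum_u|T_u|\delta$, plus a harmless extra $\delta$ per node, which you then bound more crudely by $|T_v|^2\delta$. Along the way you make explicit two things the paper leaves implicit: tracking an optimal $S^*$ through the DP table, and the one-sidedness that downward rounding provides. One small correction: $a^2+b^2+a+b+2\le(a+b+1)^2$ is equivalent to $2ab+a+b\ge1$, not to $2ab\ge0$. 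It still holds at every internal node, since $a+b\ge1$ there, so no extra slack is needed.
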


\begin{proof}
By the linearity of expectation, the expected influence $\varsigma$ is the sum of the activation probabilities of all nodes: $\varsigma = \sum_{u \in V} p_u$. Let $p_u$ be the exact probability and $\hat{p}_u$ be the value computed under discretization. At each node $v$, the \text{Round} operation introduces a local rounding error of at most $\delta$.

The probability combination function $f(p_l, p_r) = 1 - (1 - p_l b_l)(1 - p_r b_r)$ has partial derivatives $\frac{\partial f}{\partial p_l}, \frac{\partial f}{\partial p_r} \le 1$. Let $e_v = |p_v - \hat{p}_v|$ be the error at node $v$. This error is bounded by the sum of its children's errors plus the local rounding error $\delta$:
$$ e_v \le \frac{\partial f}{\partial p_l}e_l + \frac{\partial f}{\partial p_r}e_r + \delta \le e_l + e_r + \delta $$
By induction, the error $e_u$ at any node $u$ is bounded by the number of nodes in its subtree $|T_u|$ multiplied by $\delta$: $e_u \le |T_u|\delta$.

The total error in the expected influence is the sum of errors at all nodes:
$$ |\hat{\varsigma} - \varsigma| \le \sum_{u \in V} e_u \le \sum_{u \in V} |T_u|\delta $$
In the worst-case (a line graph), $\sum_{u \in V} |T_u| = \frac{n(n+1)}{2}$. Thus, the total error is $O(n^2 \delta)$. To ensure the total error is $\le \epsilon$, we set $\delta = \frac{\epsilon}{n^2}$, which implies $M = \frac{n^2}{\epsilon}$.
\end{proof}

\begin{lemma}[FPTAS Guarantee]
    With $M = O(n^2/\epsilon)$, \Cref{alg:DPIM_InArb_Final} is a Fully Polynomial-Time Approximation Scheme (FPTAS).
\end{lemma}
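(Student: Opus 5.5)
The plan is to prove two things: a polynomial running-time bound for \Cref{alg:DPIM_InArb_Final} with $M=\lceil n^2/\epsilon\rceil$, and a $(1-\epsilon)$ multiplicative guarantee for the seed set it returns. Both follow from the two preceding lemmas (optimality of the recursive structure and the error bound).

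\textbf{Running time.} Each $\mathcal{S}(v,k)$ is a Boolean table over the grid $\{0,\delta,\ldots,1\}\times\{0,\delta,\ldots,n\}$, so it has at most $(M+1)(nM+1)=O(nM^2)$ entries. Computing $\mathcal{S}(v,k)$ requires, for each split $i\le k$, a pass over all pairs in $\mathcal{S}(v_l,i)\times\mathcal{S}(v_r,k-i)$. That is $O(n^2M^4)$ combinations per split, each evaluated and rounded in $O(1)$ time. The binarization adds at most $O(n)$ virtual nodes, so over all $O(n)$ nodes and $O(K)\le O(n)$ budgets the total time is $O(n\cdot K^2\cdot n^2M^4)=O(n^{5}\cdot n^{8}/\epsilon^4)$. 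This is polynomial in $n$ and $1/\epsilon$. To recover the seed set, we store for each table cell a back-pointer giving the split, the two child cells, and the seed decision. We then trace back from the cell $(p,\varsigma)\in\mathcal{S}(\text{root},K)$ with maximum $\varsigma$.

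\textbf{Approximation.} Let $S^*$ be optimal, with $\mathrm{OPT}=\sigma(S^*)$. By the optimality lemma, the exact (unrounded) pair sequence realized by $S^*$ exists. Following that same sequence of choices through the rounded DP, the error lemma shows that some cell of $\mathcal{S}(\text{root},K)$ has $\hat\varsigma\ge \mathrm{OPT}-n^2\delta$. Conversely, any cell we trace back corresponds to an actual seed set $S$. Its true value $\sigma(S)$ can be recomputed exactly bottom-up, and by the same error lemma applied to $S$'s own choice sequence it satisfies $\sigma(S)\ge\hat\varsigma$, since rounding is always downward.

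The main obstacle is this monotonicity step. We need $f(p_l,p_r)=1-(1-p_lb_l)(1-p_rb_r)$ to be nondecreasing in each argument, and the real-node contribution $p_{tmp}$ to be nondecreasing as well. Then by induction, downward-rounded estimates never exceed the true values of the realizing strategy, and the rounded value of $S^*$'s sequence is at least its true value minus the accumulated error. Combining the two directions gives $\sigma(S)\ge\hat\varsigma_{\max}\ge\mathrm{OPT}-n^2\delta=\mathrm{OPT}-\epsilon$. Since $K\ge1$ implies $\mathrm{OPT}\ge1$, this yields $\sigma(S)\ge(1-\epsilon)\mathrm{OPT}$, and hence an FPTAS.
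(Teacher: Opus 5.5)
Your proposal is correct and follows the same route as the paper. You bound each table by $(M+1)(nM+1)=O(nM^2)$ cells to get an $O(nK\cdot K\cdot n^2M^4)$ running time, then use the error-bound lemma with $\delta=\epsilon/n^2$ and $\mathrm{OPT}\ge 1$ to turn the additive $\epsilon$ error into a $(1-\epsilon)$ factor. Your back-pointer reconstruction and the one-sided argument (downward rounding plus monotonicity of $f$ gives $\sigma(S)\ge\hat\varsigma_{\max}$) are extra rigor the paper omits, since its algorithm only outputs an estimate; they make the guarantee hold for an actual returned seed set.
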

\begin{proof}
    As shown in the Error Bound lemma, setting $M = n^2/\epsilon$ ensures the additive error is at most $\epsilon$. 
    Since the number of infected vertices is at least $k$, which is at least $1$, an additive error of at most $\epsilon$ implies a multiplicative error of at most $\epsilon$.
    
    For any node $v$ and budget $k$, the number of unique achievable pairs in $\mathcal{S}(v, k)$ after rounding is at most the number of possible discrete values for $p$ and $\varsigma$. Since $p \in \{0, \delta, \dots, 1\}$ and $\varsigma \in \{0, \delta, \dots, n\}$, the size of the set is:
    $$ |\mathcal{S}(v, k)| \le (M+1)(nM+1) = O(nM^2) $$
    Substituting $M = n^2/\epsilon$, we get $|\mathcal{S}(v, k)| = O(n^5/\epsilon^2)$.
    
    \textbf{Time Complexity:} There are $O(nK)$ states. For each state, we iterate through $O(K)$ budget partitions. In each partition, we compute the Cartesian product of two sets of size $O(nM^2)$, which takes $O(n^2 M^4)$ time. 
    The total time complexity is:
    $$ O(n \cdot K \cdot K \cdot n^2 M^4) = O(n^3 K^2 M^4) = O\left( \frac{n^{11} K^2}{\epsilon^4} \right) $$
    Since the running time is polynomial in both $n$ and $1/\epsilon$, the algorithm is an FPTAS.
\end{proof}

\begin{algorithm}[t]

\caption{DPIM-InArb: Discretized Achievable Pairs DP}

\label{alg:DPIM_InArb_Final}

\SetKwProg{Fn}{Function}{ is}{end}

\SetKwFunction{FCompute}{Compute-S}


\SetKw{KwAnd}{and}

\KwIn{In-arborescence $T=(V,E)$, total budget $K$, discretized set $Q$, edge weights $B$}

\KwOut{Maximum expected influence $\varsigma_{max}$}

\BlankLine

Initialize global hash table $\mathit{Memo} \gets \emptyset$\;

\Fn{\FCompute{$v, k$}}{

    \If{$(v, k) \in \mathit{Memo}$}{

        \Return $\mathit{Memo}[(v, k)]$ \tcp*[r]{Retrieve cached result}

    }

    $S_{v, k} = \emptyset$\;

    \eIf{$v$ is a leaf node}{

        $S_{v, k} \gets S_{v, k} \cup \{(0.0, 0.0)\}$\;

        \If{$k \ge 1$}{

            $S_{v, k} \gets S_{v, k} \cup \{(1.0, 1.0)\}$ \tcp*[r]{Seed at $v$}

        }

    }{

        Let $v_l, v_r$ be children of $v$ with edge weights $b_l, b_r$\;

        \For{$i = 0$ \KwTo $k$}{

            $S_l = \FCompute{$v_l, i$}$\;

            $S_r = \FCompute{$v_r, k-i$}$\;

            \tcp{Case 1: $v$ is not a seed node ($I_{no}$)}

            \ForEach{$(p_l, \varsigma_l) \in S_l$}{

                \ForEach{$(p_r, \varsigma_r) \in S_r$}{
                    $p_{tmp} = 1 - (1 - p_{l} \cdot b_{l})(1 - p_{r} \cdot b_{r})$\;


                    \eIf{v is real}{ $\varsigma_{tmp} = \varsigma_{l} + \varsigma_{r} + p_{tmp}$ }{ $\varsigma_{tmp} = \varsigma_{l} + \varsigma_{r}$ }

                    $S_{v,k} \gets S_{v,k} \cup \{(\text{Round}(p_{tmp}), \text{Round}(\varsigma_{tmp}))\}$


                }

            }

            \tcp{Case 2: $v$ is selected as a seed node ($I_{sel}$)}

            \If{$k \ge 1$ \KwAnd $i \le k-1$ and \text{v is a real node}}{

                $S_l' = \FCompute{$v_l, i$}$\;

                $S_r' = \FCompute{$v_r, k-1-i$}$\;

                \ForEach{$(p_l, \varsigma_l) \in S_l'$}{

                \ForEach{$(p_r, \varsigma_r) \in S_r'$}{
                    $p_{tmp} = 1 - (1 - p_{l} \cdot b_{l})(1 - p_{r} \cdot b_{r})$\;

                    $\varsigma_{tmp} = \varsigma_{l} + \varsigma_{r} + 1.0$\;
                    $S_{v,k} \gets S_{v,k} \cup \{(1.0, \text{Round}(\varsigma_{tmp}))\}$\;

                }

            }

            }

        }

    }

    $\mathit{Memo}[(v, k)] \gets S_{v, k}$\;

    \Return $S_{v, k}$\;

}

\BlankLine

$S_{root} = \FCompute{$root, K$}$\;

\Return $\max \{ \varsigma \mid (p, \varsigma) \in S_{root} \}$\;

\end{algorithm}

%
%
%
\bibliographystyle{splncs04}
\bibliography{ref}

\end{document}